\newcommand{\R}{\mathbb{R}}
\newcommand{\dist}{\mathrm{dist}}
\renewcommand{\P}{\mathbb{P}}
\newcommand{\mc}[1]{\mathcal{#1}}
\newcommand{\len}{\mathrm{length}}
\def\to{{\rightarrow}}
\newtheorem{definition}{Definition}[section]
\newtheorem{theorem}[definition]{Theorem}
\newtheorem{fact}[definition]{Fact}
\newtheorem{corollary}[definition]{Corollary}
\newtheorem{proposition}[definition]{Proposition}
\theoremstyle{definition}
\newtheorem{example}[definition]{Example}
\newcommand{\schur}[2]{\texttt{Schur} \left(#1,  #2\right) }
\title{Localization of Electrical Flows}
\author{Aaron Schild\footnote{Supported by NSF grant CCF-1535977.} \\EECS, UC
Berkeley \\ \texttt{aschild@berkeley.edu} \and Satish Rao\footnote{Supported by NSF grant CCF-1535977.} \\ EECS, UC Berkeley\\ \texttt{satishr@cs.berkeley.edu}  \and Nikhil Srivastava\footnote{Supported by NSF grant
CCF-1553751 and a Sloan research fellowship.} \\Mathematics, UC
Berkeley\\ \texttt{nikhil@math.berkeley.edu}}
\begin{document}
\maketitle

\begin{abstract}
We show that in any graph, the average length of a flow path in an electrical flow between the endpoints of a random edge is $O(\log^2 n)$. This is a consequence of a more general result which shows that the spectral norm of the entrywise absolute value of the transfer impedance matrix of a graph is $O(\log^2 n)$. This result implies a simple oblivious routing scheme based on electrical flows in the case of transitive graphs.
\end{abstract}

\newcommand{\hP}{\overline{\Pi}}
\newcommand{\reff}{\mathrm{Reff}}
\section{Introduction}
Electrical Flows have have played an important role in several recent advances
in graph algorithms --- for instance, in the context of exact and approximate
maximum flow/minimum cut \cite{CKMST,LRS,madry13}, multicommodity flow
\cite{kelner2012faster}, oblivious routing
\cite{harsha08,lawler,kelner2011electric}, graph
sparsification \cite{spielman2011graph}, and random spanning tree generation
\cite{kelner2009faster, MST}. This is due to the
emergence of nearly linear time Laplacian solvers for computing them, beginning
with the work of Spielman and Teng \cite{spielmanteng}, and also to their well-known connections
with random walks. Using them to solve combinatorial problems is not typically
immediate, and may be likened to putting a square peg into a round hole: at a
high level, many of the traditional problems of computer science are concerned
with finding flows in graphs with controlled $\ell_1$ and $\ell_\infty$ norms
(corresponding to distance and congestion, respectively), whereas electrical
flows minimize the $\ell_2$ norm (energy). Reducing one to the other often
requires some sort of iterative method for combining many electrical flows with
varying demands and graphs.

In this work, we ask the following basic structural question about electrical
flows in arbitrary unweighted graphs: 
\begin{quote}
What is the typical $\ell_1$ norm of the
unit current electrical flow between two neighboring vertices $u,v$ in a graph?
\end{quote}
Recall that the $\ell_1$ norm of a unit circulation-free flow is the average
distance traveled by the flow, since any such flow $f_{uv}:E\rightarrow \R$ may
be decomposed as a convex combination of paths which all have the same direction of flow on every edge:
$$f_{uv} = \sum_{i \in \mathcal{P}_{uv}} \alpha_i f_i,$$
where $\mc{P}_{uv}$ is the set of simple paths from $u$ to $v$, and we have
$$\|f_{uv}\|_1 = \sum_{i\in \mc{P}_{uv}} \alpha_i\|f_i\|_1 = \sum_{i\in
\mc{P}_{uv}} \alpha_i\len(f_i).$$ Thus, this question asks when/whether electrical
flows in a graph travel a greater distance than shortest paths, and by how much.

\subsection{Three Examples}
To get a feel for this problem, and to set the context for our result and its
proof, we begin by presenting three instructive examples. We
will use the notation $b_v$ for the indicator vector of a vertex, $b_{uv}=b_u-b_v$ for the signed incidence vector of
an edge $uv$, $B$ for the $m\times n$ signed edge-vertex incidence
matrix of a graph (where the edges are oriented arbitrarily), and $L=B^TB$ for
the Laplacian matrix.  For any pair of vertices $u,v$, we will use the notation
$\Delta(u,v):=\|f_{uv}\|_1/\dist(u,v)$, where $f_{uv}$ is the unit electrical
flow between $u$ and $v$ and $\dist$ is the shortest path distance in the graph.

The first example shows that in general $\Delta(u,v)$ can be quite large for the
worst-case edge in a graph.
\begin{example}[Parallel Paths] Consider the graph  consisting of a single edge between vertices
$u$ and $v$ and $\sqrt{m}$ disjoint parallel paths of length $\sqrt{m}$ with
endpoints $u$ and $v$. Since the effective resistance of the parallel paths
taken together is $1$, half of the unit flow between $u$ and $v$ will use the
paths, assigning a flow of $1/2\sqrt{m}$ to each path, and the other half will
traverse the ege $uv$. Thus, we have $\Delta(u,v)=(\sqrt{m}+1)/2$. 
However, notice that for most of the other edges in the graph, $\Delta$ is tiny.
For instance, for any edge $ab$ near the middle of one of the parallel paths,
a $1-O(1/\sqrt{m})$ fraction of the flow will traverse the single edge, so we will
have $\Delta(a,b)=O(1)$. 
\end{example}

On the other hand, $\Delta(u,v)$ is uniformly bounded for every edge in an
expander.
\begin{example}[Expander Graphs] Let $G$ be a constant degree $d-$regular expander graph with
transition matrix $P$ satisfying $\|P-J/n\|\le \lambda$ for some constant
$\lambda$, where $J$ is the all
ones matrix. Letting $Q:=P-J/n$ and $E=I-J/n$, we have the power series
expansion orthogonal to the all ones vector:
$$(L/d)^+ = (E-Q)^+ = E+\sum_{k\ge 1}Q^k.$$ Now for every edge $uv$ we calculate
the electrical flow across its endpoints:

$$\|BL^+b_{uv}\|_1 \le \|B\|_{1\to 1}\|L^+\|_{1\to 1}
\|b_{uv}\|_1,$$
where $\|\cdot\|_{1\to 1}$ is the $1\to 1$ operator norm, i.e., maximum column
sum of the matrix. Let $T=O(\log n)$ be the mixing time of $G$, after which
$\|P^T-J/n\|_{1\to 1}=\|Q^T\|_{1\to 1}\le 1/n$. Noting that $\|B\|_{1\to 1}\le d$
and $Q^k = P^k-J/n$ and applying the triangle inequality, we obtain:
$$ \|BL^+b_{uv}\|_1 \le \frac{2d}{d}\sum_{k=0}^T (\|P^k\|_{1\to
1}+\|J/n\|_{1\to 1}) + \|L^+\|_{1\to 1}\cdot\|Q^T\|_{1\to 1}.$$
Since $P^k$ is a doubly stochastic matrix we have $\|P^k\|_{1\to 1}=1$ for all
$k$. Moreover, $\|L^+\|_{1\to 1}\le \sqrt{n}\|L^+\|\le \sqrt{n}/\lambda$.
Combining these facts, we get a final bound of $\Delta(u,v)=O(\log n)$, for every edge $uv\in
G$. 
\end{example}
We remark that bounds similar to (and somewhat more general than) the above were shown in
the papers \cite{lawler, kelner2011electric} using different techniques.

Finally, we note that there are highly non-expanding graphs for which
$\Delta(u,v)$ is also uniformly bounded, which means that expansion does not
tell the whole story.
\begin{example}[The Grid] Let $G$ be the $n\times n$ two dimensional torus (i.e., grid
with sides identified). Then it is shown in \cite{lawler} that for every
edge $uv\in G$ we have $\Delta(u,v)=O(\log n)$, even though $G$ is clearly not
an expander. We briefly sketch an argument explaining where this bound comes
from. Let $uv$ be any horizontal edge in $G$, and let $w$ be a vertex in $G$ at
vertical distance $k$ from $u$ and $v$. We will show that the potential at $w$
in the unit current $uv$-electrical flow is small, in particular that 
$$\phi(w):=b_w^TL^+b_{uv}=O(1/k^2).$$ First we recall (see,
e.g., \cite{bollobas2013modern} Chapter IX) that the potential at a
vertex $w$ when $u,v$ are fixed to potentials $-1,1$ is:
$2(\P_w(t_v<t_u)-\P_w(t_u<t_v))$, where $\P_w$ is the law of the random walk
started at $w$ and $t_u$ is the first time at which the walk hits $u$.
By Ohm's law, this means that: 
$$\phi(w)\le \left|\frac{2}{\reff(u,v)}(\P_w(t_v<t_u)-\P_w(t_u<t_v))\right|,$$
where $\reff(u,v):=b_{uv}^TL^+b_{uv}$ is the effective resistance of the edge $uv$.
Since the resistance of every edge in the grid is equal to $1/2$, we find that
$|\phi(w)|=O(|\P_w(t_v<t_w)-\P_w(t_w<t_v)|)$. 

We now analyze these probabilities. Roughly speaking, the random walk
from $w$ will take time $\Omega(k^2)$ to reach the horizontal line $H$ containing
$uv$, at which point its horizontal distance (along $H$) from $w$ will be distributed as a
$k^2$-step random horizontal random walk centered at $w$ (since about half of
the steps of the random walk up to that point will be horizontal). The difference in
probabilities between any two neighboring points in $H$ will therefore be at
most $O(1/k^2)$, which implies the bound on $|\phi(x)|$. Consequently, the
potential difference across any edge $wx$ at distance $k$ is at most $O(1/k^2)$;
since there are $O(k)$ edges at distance $k$, the total contribution from such
edges is $O(1/k)$, and summing over all distances $k$ (and repeating the argument for vertical edges) yields a bound of $O(\log
n)$.
\end{example}

\subsection{Our Results}
Our first theorem is the following.
\begin{theorem} \label{th.short} If $G=(V,E)$ is an unweighted graph with $m$ edges, then
$$\sum_{uv\in E} \Delta(u,v) \le O(m\log^2 n).$$
\end{theorem}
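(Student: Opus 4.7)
The plan is to first re-express the sum as a bilinear form involving the transfer impedance matrix. Since $\dist(u,v)=1$ for every edge $uv$, and the unit $uv$-electrical flow is $f_{uv}=BL^+b_{uv}$,
$$
\sum_{uv\in E}\Delta(u,v) \;=\; \sum_{e\in E}\|BL^+ b_e\|_1 \;=\; \sum_{e,f\in E}\bigl|(BL^+B^T)_{ef}\bigr| \;=\; \bone^T\,|\Pi|\,\bone,
$$
where $\Pi:=BL^+B^T$ is the orthogonal projection onto the $(n-1)$-dimensional cut space of $G$ and $|\Pi|$ denotes its entrywise absolute value. The Rayleigh-quotient bound $\bone^T|\Pi|\bone \le \|\bone\|_2^{2}\cdot \||\Pi|\| = m\cdot\||\Pi|\|$ reduces the theorem to the spectral-norm statement $\||\Pi|\|=O(\log^2 n)$ advertised in the abstract, and the rest of the argument aims at this bound.

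To bound $\||\Pi|\|$, my plan is to mirror the multi-scale potential analysis used for the torus in Example~3. I would build a laminar family of vertex partitions of depth $L=O(\log n)$ by recursively cutting each cluster along a low-conductance (or equivalently, low-effective-resistance) boundary. This orthogonally decomposes the cut space as $\bigoplus_{k=0}^{L}W_k$, where $W_k$ captures the cut indicators introduced by the level-$k$ refinements, and correspondingly $\Pi=\sum_k\Pi_k$ with each $\Pi_k$ an orthogonal projection. The entrywise triangle inequality then gives
$$
\||\Pi|\| \;\le\; \sum_{k=0}^{L}\||\Pi_k|\|,
$$
and it suffices to establish $\||\Pi_k|\|=O(\log n)$ for each scale $k$.

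The heart of the proof and the main obstacle is this per-scale bound. Unlike $\Pi_k$ itself, which is a projection and therefore has spectral norm one, its entrywise absolute value $|\Pi_k|$ can a priori be much larger, and naive tools (Cauchy-Schwarz on each row, Frobenius bounds via $\|\Pi\|_F^2=n-1$) only give $\sqrt{n}$ rather than a polylogarithm. I expect the per-scale bound to require either a Kirchhoff random-spanning-tree dominance $|\Pi_{ef}|\le \P_T[f\in P_e(T)]$ for a uniform random spanning tree $T$, combined with tail bounds on the length of the tree path restricted to a level-$k$ cluster, or a direct matrix Chernoff/Rudelson concentration argument exploiting $\sum_{e}\reff(e)=n-1$ restricted to each cluster to bound a sum of rank-one projectors supported on the cluster. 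The hierarchical decomposition and per-scale triangle inequality are by now standard machinery in spectral graph theory; the real difficulty is producing an operator-norm bound on the entrywise absolute value of a rank-$(n-1)$ projection, which does not automatically inherit the norm of the projection itself.
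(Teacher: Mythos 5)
Your opening reduction is precisely the paper's proof of this theorem: since every edge has $\dist(u,v)=1$, one has $\sum_{uv\in E}\Delta(u,v)=\bone^T\hP\bone\le \|\bone\|_2^2\,\|\hP\|=m\|\hP\|$, and the paper then simply invokes its Theorem \ref{th.absnorm}, $\|\hP\|=O(\log^2 n)$. So the reduction is correct and identical to the paper's. The problem is everything after that: the spectral norm bound is the entire content of the paper, and your proposed route to it has a genuine gap. The per-scale estimate $\||\Pi_k|\|=O(\log n)$ is stated as the goal but never argued; you name two candidate tools (random spanning trees, matrix concentration) without carrying either out, and you yourself flag this as "the main obstacle." Moreover there are concrete reasons the sketch as written cannot close. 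First, nothing generic about being an orthogonal projection at "scale $k$" can help: a generic rank-$r$ projection has entrywise absolute value with spectral norm of order $\sqrt{r}$, so the top levels of your hierarchy face exactly the original difficulty, and the construction of a laminar, conductance-based splitting of the cut space with the required property is essentially a restatement of the theorem rather than a reduction of it. Second, the spanning-tree dominance $|\Pi_{ef}|\le\P_T[f\in P_e(T)]$ only controls row sums, $\sum_f|\Pi_{ef}|\le\E_T[\len(P_e(T))]$, and the paper's parallel-paths example shows this can be $\Omega(\sqrt{m})$ for an individual edge; any successful argument must average over edges, and your sketch does not indicate where that averaging would enter.

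For comparison, the paper proves Theorem \ref{th.absnorm} by a quite different mechanism: it suffices (by Perron--Frobenius) to bound $w^T\hP w$ for nonnegative $w$, and this is done by eliminating vertices one at a time via Schur complements. Each elimination of a vertex $u$ increases the bilinear form by at most $\texttt{Degree}_S(u)$, the ratio of the weighted $\ell_1^2$ and $\ell_2^2$ norms of the vector of potential drops $q_u^S(e)$, and the key averaging statement (Proposition \ref{prop:local-energy}, proved by a dyadic decomposition of edges according to hitting probabilities together with the sweep-cut energy bound of Proposition \ref{prop:norm-energy}) shows that some vertex always has $\texttt{Degree}_S(u)=O\bigl(\tfrac{\log n}{|S|}\bigr)\|w\|_2^2$; summing the resulting harmonic series over the $n$ eliminations yields $O(\log^2 n)$. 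If you want to salvage your outline, the missing ingredient is a proof of the per-scale bound, and at present you have neither that nor a construction of the decomposition it would apply to.
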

This theorem formalizes the intuition in the parallel paths example that there
cannot be too many edges in a graph for which the electrical flow uses very long
paths.  A corollary for edge-transitive graphs is that the above bound holds for
{\em every} edge, by symmetry. This generalizes our analysis on the grid (which
used very specific properties) to a much broader category which includes all
Cayley graphs.

Theorem \ref{th.short} is a consequence of a more general result concerning the
weighted transfer impedance matrix of a graph. Given a weighted graph $G=(V,E,c)$ with
edge weights $c_e\ge 0$, let $C$ be an $m\times m$ diagonal matrix containing
the edge weights. Then $L=B^TCB$ is the Laplacian matrix of $G$ and the {\em
weighted transfer impedance matrix} is the $m\times m$ matrix defined as: $$\Pi =
C^{1/2}BL^+B^TC^{1/2}.$$
It is well-known and easy to see that the entry $(BL^+B)(e,f)$ is the potential
difference across the ends of edge $e$ when a unit current is injected across
the ends of edge $f$, and vice versa, and that $\Pi$ is a projection matrix with trace
$n-1$. In particular, the latter fact implies that $\|\Pi\|=1$, where
$\|\cdot\|$ is the spectral norm. 

Let $\hP$ be the entrywise absolute value matrix of $\Pi$. Our main theorem is:
\begin{theorem} \label{th.absnorm} For an arbitrary weighted graph $G$,
$$\|\hP\|=O(\log^2 n)$$
\end{theorem}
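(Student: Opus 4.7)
The plan is a multi-scale decomposition of $\hP$, since a one-shot row-sum bound cannot suffice: in the parallel paths example, the row of $\hP$ indexed by the direct edge $uv$ has $\ell_1$ mass $\Theta(\sqrt{m})$, even though the theorem asserts $\|\hP\| = O(\log^2 n)$. So any proof must exploit cancellation across rows, i.e., genuinely operator-theoretic information beyond row sums.

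Because $\hP$ is non-negative and symmetric, Perron--Frobenius reduces the bound on $\|\hP\|$ to either (i) producing a positive weighting $y : E \to \R_{>0}$ satisfying $\sum_f |\Pi(e,f)|\, y_f \le C\, y_e$ for every edge $e$, or equivalently (ii) writing $\hP = \sum_{\ell=0}^{O(\log n)} \hP^{(\ell)}$ with $\|\hP^{(\ell)}\| = O(\log n)$ at each level. I would pursue (ii) via a hierarchical clustering of $V$: using an iterated low-diameter decomposition in the effective-resistance metric of $G$, build a laminar family with $O(\log n)$ levels such that clusters at level $\ell$ have effective-resistance diameter at most roughly $2^{\ell}$. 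Assign each pair of edges $(e,f)$ to the level at which their endpoints first lie in a common cluster, and let $\hP^{(\ell)}$ agree with $|\Pi|$ on pairs at that level and vanish elsewhere; then $\hP^{(\ell)}$ is block-diagonal (up to permutation) with one block per level-$\ell$ cluster.

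For the per-level bound, I would fix a single cluster $S$ and analyze its diagonal block. Using the Schur complement of $L$ onto $S$ (which preserves the relevant transfer impedances $b_e^\top L^+ b_f$ for edges internal to $S$), we reduce to a weighted graph on $S$ of effective-resistance diameter $\lesssim 2^{\ell}$. Inside this smaller graph, the potential profile of a unit $e$-flow drops like $O(1/k^2)$ in the resistance metric (as in Example~3 on the grid), and the averaging argument underlying Theorem~\ref{th.short} can be upgraded from an average row-sum bound to a Schur-test inequality via a cleverly chosen local weighting — for example $y_f \propto c_f^{1/2}$ or a level-adapted variant incorporating the edge's effective resistance inside $S$. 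This would yield $\|\hP^{(\ell)}\| = O(\log n)$, and summing over the $O(\log n)$ levels gives $\|\hP\| \le \sum_\ell \|\hP^{(\ell)}\| = O(\log^2 n)$.

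The main obstacle is precisely this per-scale operator-norm estimate. Going from the $O(\log n)$ average-row-sum bound that is morally available inside a cluster (via Theorem~\ref{th.short}) to an $O(\log n)$ operator-norm bound requires locating the right Schur-test vector, or equivalently proving a pointwise inequality for the transfer impedances inside the cluster. If no single weighting succeeds, one likely needs a second, finer recursion inside each cluster, which would be the principal technical content of the proof; once this key lemma is in hand, the multi-scale assembly and the final $O(\log^2 n)$ bound follow mechanically from the triangle inequality for the spectral norm.
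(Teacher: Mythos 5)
Your proposal is a plan rather than a proof, and the gap you flag at the end is not a technicality --- it is the entire theorem. The per-level estimate $\|\hP^{(\ell)}\|=O(\log n)$ that you defer is exactly an operator-norm bound on (a block of) the absolute transfer impedance matrix, i.e.\ the statement being proved, restricted to one resistance scale; nothing in the multi-scale scaffolding makes that single-scale problem easier, since a cluster of bounded resistance diameter is still an arbitrary graph after rescaling. The one concrete mechanism you offer for the per-level bound, the $O(1/k^2)$ decay of potentials in the resistance metric, is a feature of the two-dimensional grid (it comes from the local central limit theorem for the planar walk) and is false in general graphs --- the parallel-paths example already has no useful pointwise decay. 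Two further problems with the outer decomposition: for \emph{weighted} graphs the effective resistances span $\log(c_{\max}/c_{\min})$ scales, not $O(\log n)$, so the laminar family need not have $O(\log n)$ levels; and assigning a pair $(e,f)$ to the first common cluster of all four endpoints does not make $\hP^{(\ell)}$ block-diagonal, since an edge can straddle two level-$\ell$ clusters.

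The paper's actual argument shares only your first step (Perron--Frobenius reduces $\|\hP\|$ to bounding $w^T\hP w$ for a nonnegative unit vector $w$, which is weaker than a Schur test) and then goes a completely different way: it eliminates vertices one at a time by Schur complement. Writing $L^+$ via Fact~\ref{fact:blockInverse}, eliminating a vertex $u$ changes the quadratic form by at most $\texttt{Degree}_S(u)$, the ratio of the weighted $\ell_1^2$ and $\ell_2^2$ norms of the induced flow vector $(q_u^S(e))_e$. The key lemma, Proposition~\ref{prop:local-energy}, shows $\sum_{u\in S}\texttt{Degree}_S(u)\le O(\log|S|)\,\|w\|_2^2$, so one can always eliminate a vertex at cost $O(\log n/|S|)\|w\|_2^2$; summing the harmonic series over $n$ eliminations gives $O(\log^2 n)$. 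Note where the two logarithms actually come from: one from a dyadic decomposition of edges by \emph{potential level} (not resistance distance) inside Proposition~\ref{prop:local-energy}, combined with the energy bound of Proposition~\ref{prop:norm-energy}, and the other from the harmonic sum over eliminations --- not from ``$O(\log n)$ scales times $O(\log n)$ per scale.'' If you want to salvage your outline, the missing ingredient is precisely an averaging principle of the type of Proposition~\ref{prop:local-energy}: you cannot hope for a pointwise or per-cluster bound, but you can show that \emph{some} vertex is cheap to eliminate.
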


Theorem \ref{th.short} follows from Theorem \ref{th.absnorm} by plugging in the
the all ones vector $u=(1,\ldots,1)^T$:
$$ u^T\hP u = \sum_{e\in E} \|\Pi_e\|_1 = \sum_{e\in E} \Delta(e),$$
where $\Pi_{e}=BL^+b_{uv}$ is the row of $\Pi$ corresponding to $e=uv$,
i.e., the electrical flow across the endpoints of $e$. Since $\|u\|^2=m$, the
spectral norm bound in Theorem \ref{th.absnorm} implies that $u^T\hP u\le
O(m\log^2 n)$.

\subsection{Applications to Oblivious Routing} Oblivious routing refers to
the following problem: given a graph $G$, specify a set of flows $\{f_{uv}\}$
between pairs of vertices $u,v$ so that for any set of demand pairs
$(s_1,t_1),\ldots,(s_k,t_k)$, the congestion of the flow obtained by taking the
union of $\{f_{s_it_i}\}_{i\le k}$ is at most a small factor (called the
competitive ratio) greater than the congestion of the optimal multicommodity
flow for the given pairs. This is a well-studied problem with a vast literature
which we will not attempt to recount; a landmark result is the optimal theorem
of R\"acke \cite{racke}
which shows that there is an oblivious routing scheme with competitive ratio
$O(\log n)$ for every graph. 

In spite of this optimal result, there has been interest in studying whether
simpler schemes achieve good competitive ratios. A particulary simple scheme,
studied in \cite{harsha08, lawler, kelner2011electric}, is to simply
route $f_{uv}$ using the electrical flow. The paper \cite{harsha08} shows that
this scheme has a good competitive ratio on any graph when restricted to demands
which all share a single source. It was shown in \cite{lawler,
kelner2011electric} that the
competitive ratio of electrical routing on an unweighted graph is exactly equal to $\|\Pi\|_{1\to 1}$, i.e.,
the maximum of $\Delta(u,v)$ over all edges in a graph. In these papers, it was
shown that for grids, hypercubes, and expanders the competitive ratio is $O(\log
n)$. Our theorem immediately extends this to all transitive graphs, albeit with
a guarantee of $O(\mathrm{polylog}(n))$ rather than $O(\log(n))$. 

\begin{corollary} Electrical Flow Routing achieves a competitive ratio of
$O(\log^2 n)$ on every edge-transitive graph.\end{corollary}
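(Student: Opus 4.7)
The plan is to observe that this corollary follows almost immediately by combining three ingredients: Theorem \ref{th.short}, edge-transitivity, and the characterization of the competitive ratio of electrical routing from \cite{lawler,kelner2011electric}, which says that on an unweighted graph this competitive ratio equals $\|\Pi\|_{1\to 1} = \max_{uv \in E}\Delta(u,v)$. Since $\Delta$ appears only as a maximum over edges, the plan is to upgrade the average bound of Theorem \ref{th.short} to a pointwise bound using symmetry.

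First, I would verify that on an edge-transitive graph the function $e \mapsto \Delta(e)$ is constant. Given a graph automorphism $\sigma$ with corresponding permutation matrix $P$ on vertices (and induced permutation matrix $P_E$ on edges, after fixing edge orientations), we have $PLP^T = L$, hence $PL^+P^T = L^+$, and therefore the unit current flow $BL^+b_{uv}$ across an edge $uv$ is carried by $P_E$ to the unit current flow $BL^+b_{\sigma(u)\sigma(v)}$ across its image. Permutations preserve $\ell_1$ norms and every edge has shortest-path length one, so $\Delta(uv) = \Delta(\sigma(u)\sigma(v))$. Edge-transitivity then implies $\Delta(e)$ takes the same value $\Delta$ for every edge $e \in E$.

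Next, I would apply Theorem \ref{th.short} to conclude
$$m \cdot \Delta \;=\; \sum_{uv \in E} \Delta(u,v) \;\le\; O(m \log^2 n),$$
so $\Delta = O(\log^2 n)$. Invoking the characterization of the competitive ratio then gives the corollary.

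I do not expect any real obstacle: the technical content is entirely absorbed into Theorem \ref{th.absnorm} (and hence Theorem \ref{th.short}), and the symmetrization step is a routine averaging-by-automorphism argument that only uses the invariance of $L^+$ under graph automorphisms.
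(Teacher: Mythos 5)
Your proposal is correct and follows essentially the same route as the paper: both use edge-transitivity to upgrade the average bound of Theorem \ref{th.short} to a uniform bound $\Delta(e)=O(\log^2 n)$ on every edge, and then invoke the characterization from \cite{lawler,kelner2011electric} that the competitive ratio of electrical routing equals $\|\Pi\|_{1\to 1}=\max_e \Delta(e)$. The only difference is that you spell out the automorphism-invariance of $L^+$ explicitly, which the paper compresses into the phrase ``by symmetry.''
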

\begin{proof} By Theorem \ref{th.short} and symmetry, we have that every column
sum of $\hP$ is at most $O(\log^2 n)$. By Proposition 1 and Lemma 4 of
\cite{lawler} (or by Theorem 3.1 of \cite{kelner2011electric}), this implies that routing each pair by the electrical flow has a
competitive ratio of $O(\log^2 n)$ as an oblivious routing scheme. \end{proof}

\subsection{Techniques}
Given the expander example above, it may be tempting to attempt to prove Theorem
\ref{th.short} by decomposing an arbitrary graph into disjoint expanding
clusters. However, using such a decomposition would likely require proving that
edge electrical flows do not cross between the clusters, which is what we are
trying to show in the first place.

We use an alternate scheme inspired by recent
Schur-complement based Laplacian solvers. Recall the Schur complement formula
for the pseudoinverse of a symmetric block matrix (see e.g. \cite{durfee} Section 5):

\begin{fact}\label{fact:blockInverse}
If 
$$L = \begin{bmatrix} P & Q \\ Q^T & R\end{bmatrix}$$
for symmetric $P,R$ and $R$ invertible, then:
\begin{equation}\label{eqn:blockInverse}
L^{+}= 
Z^T
\begin{bmatrix}
I & 0 \\ -R^{-1}Q^T & I
\end{bmatrix}
\begin{bmatrix}
\schur{L}{P}^+ & 0 \\ 0 & R^{-1}
\end{bmatrix}
\begin{bmatrix}
I & -QR^{-1} \\ 0 & I
\end{bmatrix}
Z
\end{equation}
where $\schur{L}{P}=P-QR^{-1}Q^T$ denotes the Schur complement of $L$ onto $P$, obtained by
eliminating $R$ by partial Gaussian elimination, and $Z$ is the projection
orthogonal to the nullspace of $L$.
\end{fact}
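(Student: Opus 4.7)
The plan is to recognize (\ref{eqn:blockInverse}) as the pseudoinverse of a block $LDL^T$ factorization of $L$, and then verify the resulting expression against the Moore--Penrose characterization of $L^+$.

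First I would write down the underlying factorization explicitly. Setting $S := \schur{L}{P} = P - QR^{-1}Q^T$ and
\[
U := \begin{bmatrix} I & QR^{-1} \\ 0 & I \end{bmatrix}, \qquad D := \begin{bmatrix} S & 0 \\ 0 & R \end{bmatrix},
\]
a direct block multiplication verifies $L = UDU^T$. Since $U$ is unipotent upper triangular it is invertible with $U^{-1}$ obtained by flipping the sign of the $QR^{-1}$ block, and since $R$ is invertible $D^+$ is block-diagonal with blocks $S^+$ and $R^{-1}$. Matching the three matrices appearing in (\ref{eqn:blockInverse}) against $U^{-1}$, $D^+$, and $U^{-T}$ shows that the right-hand side of (\ref{eqn:blockInverse}) is exactly $Z\,(U^{-T} D^{+} U^{-1})\, Z$, using that $Z$ is an orthogonal projection so $Z^T = Z$.

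Setting $M := U^{-T} D^{+} U^{-1}$, I would show $L^+ = ZMZ$ by verifying the two conditions that uniquely determine the pseudoinverse: (i) $\mathrm{range}(ZMZ) \subseteq \mathrm{range}(L) = (\ker L)^{\perp}$, which is immediate from the leading factor $Z$; and (ii) $L(ZMZ)b = Zb$ for every $b$. For (ii), I compute
\[
LM \;=\; UDU^T \cdot U^{-T} D^{+} U^{-1} \;=\; U\,(DD^{+})\, U^{-1}.
\]
For any $y \in \mathrm{range}(L)$, write $y = Lx = UDU^T x$, so that $U^{-1} y = DU^T x \in \mathrm{range}(D)$; since $DD^{+}$ acts as the identity on $\mathrm{range}(D)$, this gives $LMy = U(U^{-1} y) = y$. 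Applying this with $y = Zb \in \mathrm{range}(L)$ and using $LZ = L$ yields $L(ZMZ) b = LMZ b = Zb$, which is (ii).

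The one real subtlety is that $U$ is \emph{not} orthogonal, so the naive identity $(UDU^T)^{+} = U^{-T} D^{+} U^{-1}$ fails whenever $\ker D$ is non-trivial --- which is the case here since $L$ is a Laplacian and $\ker L = U^{-T}\ker D$ is non-zero. The outer projections $Z$ in (\ref{eqn:blockInverse}) are there precisely to absorb this error: the step $DD^{+}(U^{-1} y) = U^{-1} y$ is valid only after $y$ is restricted to $\mathrm{range}(L)$, which is exactly what the left $Z$ provides, and the right $Z$ ensures we land in $\mathrm{range}(L)$. Once this observation is made, the proof reduces to the block identity $L = UDU^T$ and the elementary pseudoinverse fact $DD^{+}D = D$.
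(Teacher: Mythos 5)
The paper does not prove Fact~\ref{fact:blockInverse}; it cites \cite{durfee}, Section~5, and moves on. Your argument --- factoring $L = UDU^T$ with unipotent $U = \begin{bmatrix} I & QR^{-1} \\ 0 & I \end{bmatrix}$ and $D = \mathrm{diag}(\schur{L}{P},\,R)$, recognizing the right-hand side of (\ref{eqn:blockInverse}) as $Z\,U^{-T}D^+U^{-1}\,Z$, and then checking that this matrix has range inside $\mathrm{range}(L)$ and acts as a right inverse of $L$ on $\mathrm{range}(L)$, which together characterize $L^+$ for symmetric $L$ --- is correct, self-contained, and is the standard route to this identity. Your remark that the naive conjugation rule $(UDU^T)^+ = U^{-T}D^+U^{-1}$ fails because $U$ is not orthogonal while $\ker D$ is nontrivial, and that the outer projections $Z$ are precisely what repair it, is exactly the point worth emphasizing.
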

The idea is to apply this formula to compute the terms
$|b_e^TL^+b_f|$ by eliminating vertices one-by-one, as in \cite{kyng}, and
bounding the original value of $|b_e^TL^+b_f|$ in terms of the value on
small Schur complements. One cannot eliminate arbitrary vertices and get
a good bound, though. We use Proposition \ref{prop:local-energy} to show
that there always exists a vertex whose elimination results in a good bound. Since Laplacian matrices with self loops are closed under taking Schur complements the remaining matrix is the Laplacian of a weighted graph as well. Mapping the demand vectors $b_e$ and $b_f$ to the vertex set of this graph and recurring yields the sum of interest.

\section{Schur Complements, Probabilities, and Energies}
In this section we collect some preliminary facts about Schur complements of
Laplacians and establish some useful correspondences between electrical
potentials and probabilities. We do this so that after recurring on a
Schur complement of the graph $G$ that we care about, we can interpret the recursively generated sums that we generate using Fact \ref{fact:blockInverse} in terms of $G$. We will make frequent use of the fact that for a Laplacian matrix $L_G$ with block $L_S$, the Schur complement
$\schur{L_G}{L_S}$ is also a Laplacian. For a graph $G$ and subset of vertices $S$ will use the notation $\schur{G}{S}$ to denote the graph  corresponding to $\schur{L_G}{L_S}$. Since all vectors we will apply pseudoinverses to will be orthogonal to the corresponding nullspaces (the corresponding constant vectors, since all Schur complements will be Laplacians), we will not write the projection $Z$
in Fact \ref{fact:blockInverse} in what follows.
\begin{definition}
Consider a graph $G$. For any set of vertices $S\subseteq V(G)$ with $|S|\ge 2$, a vertex $v\in S$, and a vertex $x\in V(G)$, let
$$p_v^{G,S}(x) := \P_x[t_v < t_{S\backslash v}]$$
For an edge $e = \{x,y\}\in E(G)$, let 
$$q_v^{G,S}(e) := |p_v^S(x) - p_v^S(y)|$$
where $t_{S'}$ denotes the hitting time to the set $S'$. Let
$$r_v^{G,S}(e) := \max(p_v^S(x),p_v^S(y),1/|S|)$$
When $G$ is clear from the context, we omit $G$ from the superscript.
\end{definition}

\begin{corollary}\label{prop:sum-potentials}
For any set $S$ and any $e = \{x,y\}\in E(G)$, $\sum_{v\in S} r_v^{G,S}(x) \le 3$.
\end{corollary}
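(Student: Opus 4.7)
The plan is to exploit the fact that the maximum of finitely many non-negative numbers is bounded above by their sum, reducing the claim to three sums over $v\in S$ that can each be evaluated exactly.

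First I would write
$$\sum_{v\in S} r_v^{G,S}(e) \;\le\; \sum_{v\in S} p_v^S(x) \;+\; \sum_{v\in S} p_v^S(y) \;+\; \sum_{v\in S} \frac{1}{|S|},$$
using only that $\max(a,b,c)\le a+b+c$ for non-negative $a,b,c$. The third sum is trivially equal to $1$.

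For the first sum, observe that the events $\{t_v < t_{S\setminus v}\}$ as $v$ ranges over $S$ partition the event that the walk started at $x$ ever hits the set $S$: whenever the walk reaches $S$, it enters through exactly one element of $S$ (so the events are disjoint), and conversely each trajectory that hits $S$ falls into exactly one of these events. Since $G$ is a finite connected graph (and $|S|\ge 2$ is nonempty), the walk from $x$ reaches $S$ with probability $1$, so $\sum_{v\in S} p_v^S(x) = 1$. The identical argument gives $\sum_{v\in S} p_v^S(y) = 1$. Adding the three contributions yields the bound of $3$.

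There is no real obstacle here; the only point that needs a line of justification is the identity $\sum_{v\in S} p_v^S(x)=1$, which follows from almost-sure absorption into $S$ for a walk on a finite connected graph. (If one wanted to extend the statement to graphs where the walk is not necessarily recurrent to $S$, one would only get $\sum_v p_v^S(x)\le 1$, which still suffices for the $\le 3$ conclusion.) The slack in the inequality $\max\le\text{sum}$ is what makes the constant $3$ rather than something smaller, but this constant is all that is needed downstream.
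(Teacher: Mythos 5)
Your proof is correct and is essentially identical to the paper's: both bound $\max(p_v^S(x),p_v^S(y),1/|S|)$ by the sum of the three terms and use that $\{p_v^S(z)\}_{v\in S}$ sums to $1$ for each fixed $z$. Your extra justification of why the hitting probabilities form a distribution is a fine elaboration of what the paper states in one line.
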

\begin{proof}
$\{p_v^{G,S}(z)\}_v$ is a distribution for any fixed vertex $z\in V(G)$. Bounding $r_v^{G,S}(e)\le p_v^{G,S}(x) + p_v^{G,S}(y) + \frac{1}{|S|}$ yields the desired result.
\end{proof}

It is well-known that the above probabilities can be represented as normalized
potentials (see, for instance, \cite{bollobas2013modern} Chapter IX).
\begin{fact}\label{rmk:elec-prob-link}
Let $H$ be the graph obtained by identifying all vertices of $S\backslash \{v\}$ to one vertex $s$. Then $p_v^S(x) := \frac{|b_{vs}^T L_H^+ b_{xs}|}{b_{vs}^T L_H^+ b_{vs}}$ for any $x\in V(G)$ and $q_v^S(e) := \frac{|b_{vs}^T L_H^+ b_e|}{b_{vs}^T L_H^+ b_{vs}}$ for any $e\in E(G)$.
\end{fact}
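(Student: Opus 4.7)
The plan is to prove the fact by combining two classical ingredients: (i) hitting probabilities in $G$ are unchanged when the set $S\setminus\{v\}$ is contracted to a single vertex $s$ in $H$, and (ii) hitting probabilities in an electrical network coincide with the harmonic potential induced by appropriate boundary conditions. Once both are in place, the formula drops out by reading the potentials off from the pseudoinverse $L_H^+$.

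First I would argue that for every $x \in V(G)$,
$$p_v^{G,S}(x) = \P_x^G[t_v < t_{S\setminus v}] = \P_x^H[t_v < t_s].$$
The reason is that before its first visit to $S$, the random walk on $G$ is indistinguishable from the walk on $H$, since the contractions defining $H$ touch only vertices in $S\setminus\{v\}$; and the event in question is determined entirely by the trajectory up to its first visit to $\{v\}\cup(S\setminus v)$. Thus I can henceforth work entirely in $H$ with the two-vertex boundary $\{v,s\}$.

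Next I would invoke the classical identity that in $H$, the function $\phi(x) = \P_x^H[t_v < t_s]$ is the unique function that is harmonic off $\{v,s\}$ with boundary values $\phi(v)=1$, $\phi(s)=0$. To realize $\phi$ as an electrical potential, inject a current $i$ at $v$ and extract it at $s$. The resulting potential vector is $\phi = i \cdot L_H^+ b_{vs}$ (up to an additive constant, which is fixed by $\phi(s)=0$), so
$$\phi(x) - \phi(s) = i \cdot b_{xs}^T L_H^+ b_{vs}.$$
Choosing $i = 1/(b_{vs}^T L_H^+ b_{vs})$ enforces $\phi(v) = 1$, and substituting gives
$$p_v^{G,S}(x) = \phi(x) = \frac{b_{xs}^T L_H^+ b_{vs}}{b_{vs}^T L_H^+ b_{vs}}.$$
The absolute value in the statement is harmless because $0 \le \phi(x) \le 1$ by the maximum principle. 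For the edge version, I would simply take differences: for $e=\{x,y\}$,
$$q_v^{G,S}(e) = |\phi(x)-\phi(y)| = \frac{|(b_x - b_y)^T L_H^+ b_{vs}|}{b_{vs}^T L_H^+ b_{vs}} = \frac{|b_{vs}^T L_H^+ b_e|}{b_{vs}^T L_H^+ b_{vs}},$$
using symmetry of $L_H^+$.

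No step is a real obstacle; the only points requiring care are (a) writing potentials using $L_H^+$ rather than $L_H^{-1}$ (fine because $b_{vs}$ and $b_e$ are orthogonal to the constant nullspace, as noted at the start of Section 2), and (b) confirming that contracting $S\setminus\{v\}$ does not perturb the hitting distribution on $\{v\}\cup(S\setminus v)$ — which follows because the contracted vertices are never interior to the sample paths considered.
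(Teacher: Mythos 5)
Your argument is correct and complete. The paper offers no proof of this fact --- it is stated as ``well-known'' with a pointer to Bollob\'as, Chapter IX --- and what you have written is precisely the standard argument that citation refers to: the contraction of $S\setminus\{v\}$ preserves the walk up to its first visit to $S$ (hence the hitting probabilities), and the hitting probability $\P_x^H[t_v<t_s]$ is the unique harmonic extension of the boundary data $1$ at $v$, $0$ at $s$, which the normalized potential $b_{xs}^TL_H^+b_{vs}/(b_{vs}^TL_H^+b_{vs})$ also realizes. Your handling of the normalization, the pseudoinverse (the vectors involved are orthogonal to the constant nullspace), and the passage from the vertex version to the edge version via symmetry of $L_H^+$ are all sound.
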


In proving the desired result, it will help to recursively compute Schur
complements with respect to certain sets of vertices $S$. We now relate the
Schur complement to the above probabilities, which will be central to our proof;
the following proposition is likely to be known but we include it for completeness.

\begin{proposition}\label{prop:schur-probabilities}
For a set of vertices $S\subseteq V(G)$ and a vertex $x\in V(G)$ possibly not in $S$, let $b_x\in \mathbb{R}^{V(G)}$ denote the indicator vector of $x$. Let $b_x^S\in \mathbb{R}^{S}$ denote the vector with coordinates $b_x^S(v) = p_v^S(x)$ for all $v\in S$. Write $L_G$ as a two-by-two block matrix:

\[
L=
  \begin{bmatrix}
    P & Q \\
    Q^T & R
  \end{bmatrix}
\]
where $P$, $Q$, and $R$ have index pairs $S\times S$, $S\times (V(G)\setminus S)$, and $V(G)\setminus S\times V(G)\setminus S$ respectively. 
Then
$$b_x^S = M_S b_x$$
where 
$$M_S=\begin{bmatrix} I & -QR^{-1}\end{bmatrix}.$$\end{proposition}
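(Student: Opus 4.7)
The plan is to identify, for each fixed $v \in S$, the function $x \mapsto p_v^{G,S}(x)$ on $V(G)$ as the unique harmonic extension of the boundary data $p_v^{G,S}(v)=1$ and $p_v^{G,S}(u)=0$ for $u \in S\setminus v$. The proposition then drops out by reading off this extension from block Gaussian elimination and invoking the symmetry of $R$.

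In more detail, fix $v \in S$ and define $h_v \in \mathbb{R}^{V(G)}$ by $(h_v)_x := p_v^{G,S}(x)$. By a one-step conditioning argument for the random walk on $G$ (equivalently, by Fact \ref{rmk:elec-prob-link}, since hitting probabilities are normalized electrical potentials and therefore in the kernel of $L$ at interior vertices), $h_v$ is harmonic at every $x \notin S$, i.e.\ $(Lh_v)_x = 0$ for all $x \in V(G)\setminus S$. Written in the block form of the statement, this reads $Q^T\, h_v|_S + R\, h_v|_{V\setminus S} = 0$, and since $h_v|_S$ equals $e_v$ (the standard basis vector in $\mathbb{R}^S$ supported at $v$), we obtain
$$ h_v|_{V\setminus S} \;=\; -R^{-1} Q^T e_v. $$

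Now I check $b_x^S = M_S b_x$ in two cases. If $x \in S$, then $b_x = (e_x, 0)$ in the block decomposition, so $M_S b_x = e_x$; this matches $p_v^{G,S}(x) = [v=x]$ because a walk started at $x \in S$ hits $S$ at time $0$. If instead $x \notin S$, then $b_x = (0, e_x)$ and the $v$-th entry of $M_S b_x$ is $-e_v^T Q R^{-1} e_x$. Since $R$ is a principal submatrix of the symmetric Laplacian, $R$ and hence $R^{-1}$ are symmetric, so
$$ -e_v^T Q R^{-1} e_x \;=\; -e_x^T R^{-1} Q^T e_v \;=\; \bigl(h_v|_{V\setminus S}\bigr)_x \;=\; p_v^{G,S}(x), $$
which is exactly $(b_x^S)_v$.

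The only non-mechanical ingredient is the harmonicity of $h_v$ off $S$; this is the standard first-step identity for hitting probabilities, but it is where all the probabilistic content lives, so it is the one step I would write out carefully. Everything else is block matrix algebra together with the symmetry of $R$.
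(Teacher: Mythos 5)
Your proof is correct, but it takes a genuinely different (and arguably cleaner) route than the paper. The paper fixes $x\notin S$, solves $Rp=b_x^c$, and then must identify the vector $-Qp$ as the exit distribution of the random walk from $x$ onto $S$; this is done by interpreting $-Qp$ as the flow into the collapsed vertex $s$, invoking a result from \cite{LP16} relating edge flows to first-crossing probabilities, and finally pinning down the scalar with the normalization computation $\mathbf{1}^T(-QR^{-1})b_x^c=1$. You instead fix $v\in S$, observe that $h_v(x)=p_v^{G,S}(x)$ is the harmonic extension of the boundary data $e_v$ (elementary first-step analysis), read off $h_v|_{V\setminus S}=-R^{-1}Q^Te_v$ from the block equation, and then transpose using the symmetry of $R$. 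In effect the two arguments are transposes of each other: the paper computes the columns of $-QR^{-1}$ probabilistically, while you compute its rows via harmonicity. Your version avoids both the external citation and the normalization step, at the cost of making the symmetry of $R$ load-bearing; as you note, the one step worth writing out in full is the first-step identity $(Lh_v)_x=0$ for $x\notin S$, and you should also remark that $R$ is invertible (which holds for the principal submatrix of a connected graph's Laplacian on a proper subset of vertices, as Fact \ref{fact:blockInverse} already assumes).
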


\begin{proof}
If $x\in S$, then $b_x^S$ is the indicator vector of $x$ and $x$ is in the identity block of $M_S$. Therefore, $b_x^S = M_S b_x$.

If $x\notin S$, then let $b_x^c$ denote the coordinate restriction of $b_x$ to $V(G)\setminus S$. We want to show that $b_x^S = -QR^{-1}b_x^c$. Consider the linear system

$$b_x^c = Rp$$

Let $H$ be the graph obtained by identifying all vertices in $S$ within $G$ to a
single vertex $s$. Then the vector $p'$ with $p_s' = 0$ and $p_v' = p_v$ for all
$v\in V(H)\setminus \{s\}$ is a solution to a boundary value problem with $p_s'
= 0$ and $p_x'$ having the maximum potential of any vertex. The block matrix $Q$
can be viewed as mapping the potentials $p'$ to a flow proportional to the
$xs$-flows on edges incident with $s$. By Proposition 2.2 of \cite{LP16}, for
example, the incoming flow on edges to $s$ is equal to the probability that an
$x\rightarrow s$ random walk first visits $s$ by crossing that edge. Grouping
edges according to their common endpoints shows that $-QR^{-1}b_x^c$ is a scalar
multiple of $b_x^S$.

However, notice that

$$\textbf{1}^T (-QR^{-1})b_x^c = (-\textbf{1}^T Q)R^{-1}b_x^c = \textbf{1}^T R R^{-1} b_x^c = 1 = \textbf{1}^T b_x^S$$

so $b_x^S = -QR^{-1}b_x^c$, as desired.
\end{proof}

Once one views the the $q_v^S(e)$s in the above way, it makes sense to discuss
the energy of the $q_v^S(e)$s in relation to the probabilities $p_v^S(x)$. It
turns out that the total energy contributed by edges with both endpoints having
potential at most $p$ is at most a $p$ fraction of the total energy.

\begin{proposition}\label{prop:norm-energy}
For any $p\in (0,1)$, let $F$ be the set of edges $\{x,y\}$ with $\max_{z\in \{x,y\}} p_v^S(z)\le p$. Then the total energy of those edges is at most a $p$ fraction of the overall energy. More formally,

$$\sum_{e \in F} c_e (q_v^S(e))^2 \le p\sum_{e\in E(G)} c_e (q_v^S(e))^2$$
\end{proposition}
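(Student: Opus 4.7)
The plan is to apply a discrete coarea/level-set decomposition to the harmonic function $\phi := p_v^{G,S}$. By Fact \ref{rmk:elec-prob-link}, $\phi$ is the electrical potential with boundary values $\phi(v) = 1$ and $\phi(u) = 0$ for $u \in S \setminus \{v\}$, and harmonic on $V(G)\setminus S$. By Ohm's law and conservation, the total energy $\mc{E} := \sum_{e\in E(G)} c_e(q_v^{G,S}(e))^2 = \phi^T L_G \phi$ equals the net current $i$ flowing from $v$ into $S\setminus\{v\}$.

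The key observation is a level-set current identity: for every threshold $t \in (0,1)$, the super-level set $\{z : \phi(z) \ge t\}$ contains the source $v$ but none of the sinks in $S\setminus\{v\}$, so by flow conservation the signed current across the associated cut equals $i$:
\[
I(t) \;:=\sum_{\{x,y\}:\,\phi(x)<t\le \phi(y)} c_e\,(\phi(y)-\phi(x)) \;=\; i.
\]
I would then orient each edge $\{x,y\}$ so that $\phi(x) \le \phi(y)$ and use the pointwise identity $(\phi(y)-\phi(x))^2 = (\phi(y)-\phi(x))\int_0^1 \bone[\phi(x) < t \le \phi(y)]\,dt$. Swapping sum and integral rewrites the total energy as $\mc{E} = \int_0^1 I(t)\,dt = i$, which is also a sanity check.

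To finish, observe that any $e=\{x,y\}\in F$ satisfies $\phi(y) \le p$ under the orientation above, so the indicator in the coarea integral vanishes for $t > p$. The same swap of sum and integral then yields the restricted energy as $\int_0^p I_F(t)\,dt$, where $I_F(t)$ is the same sum but over $e \in F$ only; dropping the constraint gives $I_F(t) \le I(t) = i$, hence a bound of $p \cdot i = p\,\mc{E}$, as desired. I do not expect a real obstacle: the only minor care needed is that edges directly joining $v$ to $S\setminus\{v\}$ cross every cut $t\in(0,1)$ and are counted correctly, while edges inside $S\setminus\{v\}$ contribute zero on both sides; both are automatic.
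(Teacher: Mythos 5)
Your proposal is correct and is essentially the paper's own argument: the paper likewise writes the restricted energy as an integral over sweep (threshold) cuts of the potential and uses the fact that the current crossing each such cut equals the total current, integrating only up to threshold $p$. The only cosmetic difference is that you work with the normalized potential on $G$ directly, while the paper works with the unnormalized potential on the graph obtained by identifying $S\setminus\{v\}$ to a single vertex.
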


\begin{proof}
Let $H$ be the graph obtained by identifying $S\backslash \{v\}$ to a vertex $s$
in $G$. By Fact \ref{rmk:elec-prob-link}, we can show the desired proposition by proving the following:

$$\sum_{e\in F} c_e (b_{vs}^T L_H^+ b_e)^2 \le p b_{vs}^T L_H^+ b_{vs}$$

for an arbitrary graph $H$ and the subset of edges $F\subseteq E(H)$ with $\max_{z\in \{x,y\}} |b_{vs}^T L_H^+ b_{zs}| \le p b_{vs}^T L_H^+ b_{vs}$.

Write the sum on the left side in terms of an integral over sweep cuts. For $p\in (0,1)$, let $C_p$ denote the set of edges cut by the normalized potential $p$. More precisely, let $C_p$ be the set of edges $\{x,y\}$ with $|b_{vs}^T L_H^+ b_{xs}|\ge p b_{vs}^T L_H^+ b_{vs}$ and $|b_{vs}^T L_H^+ b_{ys}|\le p b_{vs}^T L_H^+ b_{vs}$. Notice that

\begin{align*}
\sum_{e\in F} c_e (b_{vs}^T L_H^+ b_e)^2 &\le \int_0^{p b_{vs}^T L_H^+ b_{vs}} \sum_{e\in C_q} c_e |b_{vs}^T L_H^+ b_e| dq\\
&= \int_0^{p b_{vs}^T L_H^+ b_{vs}} 1 dq\\
&\le p b_{vs}^T L_H^+ b_{vs}\\
\end{align*}

where the equality follows from the fact that $C_q$ is a threshold cut for the $v-s$ electrical flow and the first inequality follows from splitting the contribution of $e$ to the sum in terms of threshold cuts. This inequality is the desired result.
\end{proof}

Finally, we relate the weighted degrees of of vertices  in $\schur{G}{S}$ to energies in $G$ with respect to $S$.
\begin{definition} Let $c_v^H$ denote the sum of the conductances\footnote{To
avoid confusion, we remind the reader that by conductances we always mean electrical
conductances, i.e., weights in the graph, and not conductances in the sense of
expansion.}of edges
incident with $v$ in $H$.\end{definition}
\begin{proposition}\label{prop:schur-conductances}
Let $G$ be a graph. Consider a set $S$ and let $H = \texttt{Schur}(G,S)$. Then

$$c_v^H = \sum_{e\in E(G)} c_e^G (q_v^{G,S}(e))^2$$
\end{proposition}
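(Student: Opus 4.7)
Since $L_H$ is itself a Laplacian (as noted at the start of the section), its $(v,v)$ diagonal entry equals the weighted degree of $v$ in $H$, i.e.\ $L_H[v,v] = c_v^H$. So it suffices to show
$$b_v^T L_H\, b_v \;=\; \sum_{e\in E(G)} c_e^G\,\bigl(q_v^{G,S}(e)\bigr)^2.$$
The strategy is to realize both sides as a single quadratic form built from the matrix $M_S = [\,I,\,-QR^{-1}\,]$ that already appeared in Proposition~\ref{prop:schur-probabilities}.

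The key identity I would establish is
$$L_H \;=\; M_S\, L_G\, M_S^T.$$
This follows from a direct $2\times 2$ block computation: using the block form $L_G=\left[\begin{smallmatrix} P & Q\\ Q^T & R\end{smallmatrix}\right]$, the right-hand block of $M_S L_G$ is $Q - QR^{-1}R = 0$, so $M_S L_G = [\,P - QR^{-1}Q^T,\; 0\,]$, and then post-multiplying by $M_S^T$ gives back $P - QR^{-1}Q^T = L_H$. Conceptually this is the well-known fact that Schur complementation onto $S$ is the same as harmonically extending a vector on $S$ to all of $V(G)$ and measuring its Dirichlet energy in $G$.

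Given the identity, I would substitute $L_G = B^T C B$ to unfold the quadratic form into a sum over edges:
$$b_v^T L_H\, b_v \;=\; (B M_S^T b_v)^T\, C\, (B M_S^T b_v) \;=\; \sum_{e=\{x,y\}\in E(G)} c_e^G\,\bigl(b_v^T M_S (b_x-b_y)\bigr)^2.$$
Finally, Proposition~\ref{prop:schur-probabilities} tells us $b_v^T M_S b_x = p_v^{G,S}(x)$ for every $x\in V(G)$, so each bracketed term equals $p_v^{G,S}(x) - p_v^{G,S}(y)$, whose absolute value is $q_v^{G,S}(e)$ by definition. Squaring and combining yields the claimed formula for $c_v^H$.

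There is no real obstacle here; the proof is essentially a bookkeeping exercise. The only delicate point is to recognize that the \emph{same} matrix $M_S$ plays two roles simultaneously: it realizes Schur complementation at the matrix level (via $M_S L_G M_S^T = L_H$), and it converts vertex indicators into harmonic/hitting-probability vectors (Proposition~\ref{prop:schur-probabilities}). Once these two roles are aligned, the identity falls out of a single expansion.
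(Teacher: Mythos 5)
Your proof is correct, but it takes a genuinely different route from the paper's. You work purely algebraically: you identify $c_v^H$ with the diagonal entry $b_v^T L_H b_v$, establish the matrix identity $L_H = M_S L_G M_S^T$ by block multiplication, expand the quadratic form via $L_G = B^T C B$ as a sum of squared potential differences, and then invoke Proposition \ref{prop:schur-probabilities} to recognize each term as $c_e (q_v^{G,S}(e))^2$. The paper instead argues electrically: it identifies $S\setminus\{v\}$ to a single vertex $s$, notes that $c_v^H$ is then the effective conductance $1/(b_{vs}^T L_I^+ b_{vs})$ between $v$ and $s$, uses commutativity of Schur complements with vertex identification to transfer this effective resistance back to $G$, and finally applies Fact \ref{rmk:elec-prob-link} to express the reciprocal as the energy $\sum_e c_e (q_v^{G,S}(e))^2$. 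Your version is more self-contained and arguably cleaner — it needs only the already-proved Proposition \ref{prop:schur-probabilities} plus the easily verified identity $M_S L_G M_S^T = \schur{L_G}{P}$, and it avoids appealing to the (unproved in the paper) commutativity of Schur complements and the energy interpretation of effective conductance. The paper's version buys continuity with the electrical/probabilistic viewpoint used throughout Section 2. The one point worth stating explicitly in your write-up is that the Schur complement of a Laplacian has zero row sums (so $L_H[v,v]$ really is the weighted degree and there is no self-loop contribution to worry about); this is a one-line check from the block structure.
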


\begin{proof}
Let $I$ be the graph obtained by identifying $S\setminus \{v\}$ to $s$ in $H$. Since the effective conductance of parallel edges is the sum of the conductance of those edges, $c_v^H = \frac{1}{b_{vs}^T L_I^+ b_{vs}}$.

By commutativity of Schur complements, $I$ can also be obtained by identifying $S\backslash \{v\}$ in $G$ before eliminating all vertices besides $s$ and $v$. Let $J$ be the graph obtained by just doing the first step (identifying $S\backslash \{v\}$). By definition of Schur complements,

$$b_{vs}^T L_I^+ b_{vs} = b_{vs}^T L_J^+ b_{vs}$$

By Fact \ref{rmk:elec-prob-link},

$$b_{vs}^T L_J^+ b_{vs} = \frac{1}{\sum_{e\in E(G)} c_e^G (q_v^{G,S}(e))^2}$$

Substitution therefore shows the desired result.
\end{proof}

\section{Proof of Theorem \ref{th.absnorm}}
We will deduce Theorem \ref{th.absnorm} from the following seemingly weaker
statement regarding positive test vectors.
\begin{theorem}\label{thm:nikhil}
Let $G$ be a graph. Then for any vector $w\in \mathbb{R}^{E(G)}_{\ge 0}$,

$$\sum_{e, f\in E} w_{e} w_{f}\sqrt{c_ec_f}|b_e^T L_G^+ b_f|\le O(\log^2 n) ||w||_2^2$$
\end{theorem}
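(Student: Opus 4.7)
The plan is to prove Theorem \ref{thm:nikhil} by induction on $|V(G)|$, eliminating vertices one at a time via Schur complements, as outlined in the techniques subsection. Writing $F(G, w) := \sum_{e,f} w_e w_f \sqrt{c_e c_f}\,|b_e^T L_G^+ b_f|$ for the quantity to be bounded, the aim is to find, at each stage, a vertex $v$ whose elimination reduces $F(G, w)$ to $F(H, w')$ on the smaller graph $H := \schur{G}{V(G)\setminus\{v\}}$, with a controlled additive loss and with $\|w'\|_2^2$ no larger than $\|w\|_2^2$. Summing the per-step losses over all $n$ eliminations should then yield the $O(\log^2 n)$ bound.

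The one-step reduction uses Fact \ref{fact:blockInverse} with the $R$-block equal to the $1\times 1$ weighted-degree entry $c_v^G$, giving an identity of the form $b_e^T L_G^+ b_f = (b_e^S)^T L_H^+ b_f^S + \delta_{e,f,v}$, where $S = V(G)\setminus\{v\}$. By Proposition \ref{prop:schur-probabilities}, the vector $b_e^S \in \mathbb{R}^S$ has coordinates $p_u^S(x) - p_u^S(y)$ for $e = \{x,y\}$, and $\delta_{e,f,v}$ is a correction supported on edges incident to $v$. To bring the resulting expression into the inductive form of Theorem \ref{thm:nikhil} on $H$, I would decompose $b_e^S$ as a signed combination of edge incidence vectors of $H$ by routing the mass along decreasing level sets of the potential functions $u \mapsto p_u^S(x)$ and $u \mapsto p_u^S(y)$; Proposition \ref{prop:schur-conductances} then ensures that the new conductances $c_u^H = \sum_{e'} c_{e'}^G (q_u^S(e'))^2$ are exactly the normalizing factors needed to absorb the routing coefficients into a new weight vector $w'$ on $E(H)$, ideally with $\|w'\|_2^2 \le \|w\|_2^2$.

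The main obstacle, and where the referenced (but not yet stated) Proposition \ref{prop:local-energy} enters, is the careful choice of $v$ and the quantitative bookkeeping of the loss per elimination step. I expect Proposition \ref{prop:local-energy} to guarantee a vertex $v$ for which the combined contribution of the correction term $\delta_{e,f,v}$ together with any weight blowup from the routing step is at most $O(\log n / n)\cdot\|w\|_2^2$. To establish such a bound I would use Corollary \ref{prop:sum-potentials} (which bounds $\sum_u r_u^S(x)\le 3$, so the probabilities are not too concentrated on many vertices $u$ simultaneously) together with a dyadic decomposition of the normalized potential over $O(\log n)$ scales, each controlled by Proposition \ref{prop:norm-energy}, which says that edges with normalized potential below $p$ contribute at most a $p$-fraction of the total energy. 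The combination should give a per-step loss of $O(\log n/n)$ in an averaged sense over the choice of $v$, and summing over $n$ eliminations then produces the desired $O(\log^2 n)$ bound.
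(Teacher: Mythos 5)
Your high-level skeleton matches the paper's: eliminate vertices one at a time via Fact \ref{fact:blockInverse}, choose the eliminated vertex by an averaging argument (Proposition \ref{prop:local-energy}, proved exactly as you anticipate, via Corollary \ref{prop:sum-potentials}, a dyadic decomposition of the normalized potential into $O(\log n)$ scales, and Proposition \ref{prop:norm-energy}), and sum the per-step losses. However, there is one genuine structural gap in how you close the induction. You propose to re-express $b_e^S$ as a signed combination of edge incidence vectors of the Schur complement $H$ and to absorb the routing coefficients into a new weight vector $w'$ on $E(H)$ with $\|w'\|_2^2 \le \|w\|_2^2$, so that the inductive hypothesis applies verbatim to $H$. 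This step is not established and is the crux: a flow-decomposition of $b_e^S$ along level sets has no a priori reason to produce weights whose $\ell_2$ norm (relative to the new conductances $c^H$) does not grow, and even a constant-factor loss per step would compound exponentially over $n$ eliminations. The paper avoids this entirely: it never rewrites the demand vectors in terms of edges of $H$. Instead it keeps the double sum over the \emph{original} edge set $E(G)$ with the original weights $w_e$ and conductances $c_e$ fixed throughout, transports only the demand vectors via the map $M_S$ of Proposition \ref{prop:schur-probabilities} (defining $\mathcal{V}_i = \sum_{e,f\in E(G)} w_e\sqrt{c_e}\,|b_e^{(i)T}L_i^+ b_f^{(i)}|\sqrt{c_f}w_f$), shows the one-step correction is exactly $\texttt{Degree}_{S_i}(x_i)$ (using Proposition \ref{prop:schur-conductances} to identify the denominator), and bounds the two-vertex base case directly by Cauchy--Schwarz.

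A second, smaller point: the per-step loss is not $O(\log n/n)$ uniformly but $O(\log n/|S_i|) = O(\log n/(n-i))$ at step $i$, since the averaging in Proposition \ref{prop:local-energy} is over the surviving vertex set, which shrinks. Summing $O(\log n/n)$ over $n$ steps would give only $O(\log n)$; it is the harmonic sum $\sum_i 1/(n-i) = O(\log n)$ that supplies the second logarithm. You should rework the reduction so that the quantity you track after each elimination is an instance you can actually control (the paper's $\mathcal{V}_i$, rather than a fresh instance of the theorem on $H$), or else supply a proof that your routing step preserves $\|w'\|_2^2$ exactly, which I do not believe is available.
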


Theorem \ref{th.absnorm} can be deduced from this by a Perron-Frobenius
argument. \begin{proof}[Proof of Theorem \ref{th.absnorm}]
Since the matrix $M = |C_G^{1/2}B_GL_G^+B_G^TC_G^{1/2}|$ has nonnegative entries, there is an eigenvector with maximum eigenvalue with nonnegative coordinates by Perron-Frobenius. Such an eigenvector corresponds to a positive eigenvalue. Theorem \ref{thm:nikhil} bounds the value of the quadratic form of this eigenvector. In particular, the quadratic form is at most $O(\log^2 n)$ times the $\ell_2$ norm squared of the vector, as desired.
\end{proof}

The proof hinges on the following key quantity. Define

$$\texttt{Degree}_S(u) := \frac{(\sum_{e\in E(G)} w_e\sqrt{c_e} q_u^S(e))^2}{\sum_{e\in E(G)} c_e q_u^S(e)^2}$$

The quantity $\texttt{Degree}_S(u)$ may be interpreted as a measure of the sparsity of the vector $(q_u^S(e))_e$, since it is the ratio of the (weighted) $\ell_1^2$ norm of this vector and its $\ell_2^2$ norm. Note that when $S=V(G)$, $w = \textbf{1}$, and $G$ is unweighted, $\texttt{Degree}_S(u)$ is simply the degree of $u$.

There are two parts to the proof: (1) recursively reducing the original problem to a number of problems involving sums of simpler inner products and (2) bounding those sums. The difference between the value of a problem and the subproblem after eliminating $u$ is at most $\texttt{Degree}_S(u)$. We want to show that there always is a choice of $u$ with a small value of $\texttt{Degree}_S(u)$. The following proposition shows this:

\begin{proposition}\label{prop:local-energy}
For any $\{c_e\}_e$-weighted graph $G$, set $S\subseteq V(G)$ with $|S|\ge 2$, and nonnegative weights $\{w_e\}_{e\in E(G)}$, the following holds:

$$\sum_{u\in S} \texttt{Degree}_S(u)\le O(\log |S|) \sum_{e\in E(G)} w_e^2$$
\end{proposition}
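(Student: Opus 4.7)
The plan is to bound $\texttt{Degree}_S(u)$ pointwise by a quantity whose sum over $u\in S$ is controlled by Corollary \ref{prop:sum-potentials}. The natural candidate is $\sum_{e} w_e^2\, r_u^{G,S}(e)$, since that corollary gives $\sum_{u\in S} r_u^{G,S}(e) = O(1)$ for every edge $e$, so swapping the order of summation costs only a constant factor.

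The key manipulation is a weighted Cauchy--Schwarz inside the numerator of $\texttt{Degree}_S(u)$, with weight $r_u^{G,S}(e)$:
$$\left(\sum_{e} w_e \sqrt{c_e}\, q_u^{G,S}(e)\right)^{\!2} = \left(\sum_{e} \bigl(w_e\sqrt{r_u^{G,S}(e)}\bigr)\cdot \frac{\sqrt{c_e}\, q_u^{G,S}(e)}{\sqrt{r_u^{G,S}(e)}}\right)^{\!2} \le \left(\sum_{e} w_e^2\, r_u^{G,S}(e)\right)\left(\sum_{e} \frac{c_e\, q_u^{G,S}(e)^2}{r_u^{G,S}(e)}\right).$$
(This is legitimate since $r_u^{G,S}(e)\ge 1/|S|>0$.) Dividing both sides by $\sum_{e} c_e q_u^{G,S}(e)^2$ reduces the proposition to showing that the second factor on the right is at most $O(\log|S|)\sum_{e} c_e q_u^{G,S}(e)^2$.

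This is exactly where Proposition \ref{prop:norm-energy} enters, via a dyadic decomposition. Since $r_u^{G,S}(e)\in [1/|S|, 1]$ there are only $K=O(\log|S|)$ scales; I would partition the edges as $E_k^u := \{e : 2^{-k-1} < r_u^{G,S}(e) \le 2^{-k}\}$ for $0\le k\le K$. On $E_k^u$ one has $1/r_u^{G,S}(e) < 2^{k+1}$, while $r_u^{G,S}(e)\le 2^{-k}$ forces $\max(p_u^{G,S}(x),p_u^{G,S}(y))\le 2^{-k}$ as long as $2^{-k}\ge 1/|S|$, so Proposition \ref{prop:norm-energy} (with $p=2^{-k}$) gives $\sum_{e\in E_k^u} c_e q_u^{G,S}(e)^2 \le 2^{-k}\sum_e c_e q_u^{G,S}(e)^2$. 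Hence each scale contributes at most $2\sum_e c_e q_u^{G,S}(e)^2$, and the $K$ scales sum to $O(\log|S|)\sum_e c_e q_u^{G,S}(e)^2$. Combining this with the Cauchy--Schwarz step gives $\texttt{Degree}_S(u) \le O(\log|S|) \sum_{e} w_e^2\, r_u^{G,S}(e)$, and summing over $u\in S$ and applying Corollary \ref{prop:sum-potentials} finishes the proof.

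The hard part is identifying the right weighting for Cauchy--Schwarz. The choice $r_u^{G,S}(e)$ is essentially forced by two opposing constraints: $\sum_u r_u^{G,S}(e)$ must be $O(1)$ so summation does not cost a factor of $|S|$, and $1/r_u^{G,S}(e)$ must not blow up the energy integral. Proposition \ref{prop:norm-energy} is precisely the statement that prevents small-$r$ edges from contributing too much energy, and the dyadic sum converts this suppression into the single $\log|S|$ factor that appears on the right-hand side of the proposition.
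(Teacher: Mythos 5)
Your proposal is correct and follows essentially the same route as the paper: the same weighted Cauchy--Schwarz with weight $r_u^{G,S}(e)$, the same dyadic partition of edges by the value of $r_u^{G,S}(e)$ (your $E_k^u$ are the paper's $Y_u^{(k)}$), the same application of Proposition \ref{prop:norm-energy} at each scale, and the same final summation over $u$ via Corollary \ref{prop:sum-potentials}.
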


We now reduce Theorem \ref{thm:nikhil} to this proposition by picking the vertex $u$ with that minimizes the summand $\texttt{Degree}_S(u)$ of Proposition \ref{prop:local-energy} and recurring on the Schur complement with $u$ eliminated. The summand
of Proposition \ref{prop:local-energy} is an upper bound on the decrease
due to eliminating $u$.

\begin{proof}[Proof of Theorem \ref{thm:nikhil} given Proposition \ref{prop:local-energy}]
Define the following:

\begin{itemize}
\item $G_0 \gets G$, $c^{(0)}\gets c$, $x_0 \gets \arg\min_{x\in V(G)}\texttt{Degree}_{S_0}(x)$, $S_0 \gets V(G)$, $i\gets 0$.
\item While $|V(G_i)| > 2$:
    \begin{itemize}
    \item $i\gets i+1$
    \item $G_i \gets \texttt{Schur}(G_{i-1}, V(G_{i-1})\setminus \{x_{i-1}\})$
    \item $c^{(i)} \gets $ conductance vector for $G_i$
    \item $S_i \gets V(G_i)$
    \item $x_i \gets \arg\min_{x\in V(G_i)} \texttt{Degree}_{S_i}(x)$
    \end{itemize}
\item $T\gets i$
\end{itemize}

Let $L_i \gets L_{G_i}$ and let $m_i = L_{x_ix_i}$. We start by understanding how to
express the left hand side of the desired inequality in $G_i$ for all $i$. For a
vertex $x\in V(G)$, let $b_x^{(i)}\in \mathbb{R}^{V(G_i)}$ denote the vector
with $b_x^{(i)}(v) = p_v^{G,S_i}(x)$ for all $v\in V(G_i)$. For an edge
$\{x,y\}\in V(G)$, let $b_{xy}^{(i)} = b_x^{(i)} - b_y^{(i)}$. Let

$$\mc{V}_i := \sum_{e,f\in E(G)} w_e \sqrt{c_e} |b_e^{(i)T} L_i^+ b_f^{(i)}| \sqrt{c_f} w_f$$

We now bound $\mc{V}_i$ in terms of $\mc{V}_{i+1}$ for all nonnegative integers
$i < T$. By Fact \ref{fact:blockInverse} and Proposition \ref{prop:schur-probabilities},
\begin{align*}
\mc{V}_i &= \sum_{e,f\in E(G)} w_e \sqrt{c_e} |b_e^{(i)T} L_i^+ b_f^{(i)}| \sqrt{c_f} w_f\\
&= \sum_{e,f\in E(G)} w_e \sqrt{c_e} |b_e^{(i+1)T} L_{i+1}^+ b_f^{(i+1)} + x_e^{(i)T} \frac{1}{m_i} x_f^{(i)}| \sqrt{c_f} w_f\\
&\le \mc{V}_{i+1} + \sum_{e,f\in E(G)} w_e \sqrt{c_e} |x_e^{(i)T} \frac{1}{m_i} x_f^{(i)}| \sqrt{c_f} w_f\\
\end{align*}
where $x_e^{(i)} := b_e^{(i)}(x_i)$. Since the $x_e^{(i)}$s are scalars,
we can futher simplify the above sum:

\begin{align*}
\sum_{e,f\in E(G)} w_e \sqrt{c_e} |x_e^{(i)T} \frac{1}{m_i} x_f^{(i)}| \sqrt{c_f} w_f &= \frac{1}{m_i}\left(\sum_{e\in E(G)} w_e \sqrt{c_e} |x_e^{(i)}|\right)^2\\
&= \frac{1}{c_{x_i}^{(i)}}\left(\sum_{e\in E(G)} w_e \sqrt{c_e} x_e^{(i)}\right)^2\\
&= \frac{\left(\sum_{e\in E(G)} w_e \sqrt{c_e} q_{x_i}^{S_i}(e)\right)^2}{\sum_{e\in E(G)} c_eq_{x_i}^{S_i}(e)^2}\\
&= \texttt{Degree}_{S_i}(x_i)\\
\end{align*}

where the second-to-last denominator equality follows from Proposition \ref{prop:schur-conductances}. Since $x_i$ minimizes $\texttt{Degree}_{S_i}(x_i)$, Proposition \ref{prop:local-energy} with $S\gets S_i$ and $G\gets G$ implies that

\begin{align*}
\texttt{Degree}_{S_i}(x_i)&\le O\left(\frac{\log n}{|S_i|}||w||_2^2\right)\\
&\le O\left(\frac{\log n}{n-i}||w||_2^2\right)\\
\end{align*}

Plugging this in shows that

$$\mc{V}_i\le \mc{V}_{i+1} + O\left(\frac{\log n}{n-i}\right) ||w||_2^2$$
for all $i < T$. Therefore, to bound $\mc{V}_0$, it suffices to bound $\mc{V}_T$. Let $S_{T} = \{a,b\}$. Then

\begin{align*}
\mc{V}_{T} &= \sum_{e,f\in E(G)} w_e \sqrt{c_e} |b_e^{(T)T}L_{T}^+b_f^{(T)}| \sqrt{c_f}w_f\\
&= \sum_{e,f\in E(G)} w_e \sqrt{c_e} q_{u_i}^{G,S_i}(e)|b_{ab}^TL_{T}^+b_{ab}|q_{u_i}^{G,S_i}(f) \sqrt{c_f}w_f\\
&= \sum_{e,f\in E(G)} w_e \sqrt{c_e} \frac{|b_{ab}^T L_{T}^+ b_e|}{b_{ab}^T L_{T}^+ b_{ab}}b_{ab}^TL_{T'}^+b_{ab}\frac{|b_{ab}^T L_{T}^+ b_f|}{b_{ab}^T L_{T}^+ b_{ab}} \sqrt{c_f}w_f\\
&\le ||w||_2^2
\end{align*}

where the last line follows from Cauchy-Schwarz. Therefore,

$$\mc{V}_{T}\le ||w||_2^2$$

Combining these bounds yields a harmonic sum that proves the desired result.
\end{proof}

Now, we prove Proposition \ref{prop:local-energy}.

\begin{proof}
For each vertex $v\in S$ and each integer $i\in [0,\log |S|]$, let $X_v^{(i)}\subseteq E(G)$ denote the set of edges $e = \{x,y\}$ for which $r_v^S(e) \le 2^{-i}$. Let $T := \log |S|$. For each $0\le i < T$, let $Y_v^{(i)} = X_v^{(i)}\setminus X_v^{(i+1)}$. Let $Y_v^{(T)} = X_v^{(T)}$.

For each $v$ and each $i\ge 0$, $X_v^{(0)} = E(G)$, and $X_v^{(i+1)}\subseteq X_v^{(i)}$. Therefore, $\{Y_v^{(i)}\}_{i=0}^T$ is a partition of $E(G)$ for each $v\in S$. By Cauchy-Schwarz,

\begin{align*}
\sum_{u\in S} \texttt{Degree}_S(u) &= \sum_{u\in S} \frac{(\sum_{e\in E(G)} w_e\sqrt{c_e} q_u^S(e))^2}{\sum_{e\in E(G)} c_e q_u^S(e)^2}\\
&\le \sum_{u\in S}\left(\sum_{e\in E(G)} r_u^S(e)w_e^2\right)\frac{\sum_{e\in E(G)} c_e q_u^S(e)^2/r_u^S(e)}{\sum_{e\in E(G)} c_e q_u^S(e)^2}\\
\end{align*}

By the definition of $X_u^{(i+1)}$,

$$\sum_{u\in S}\left(\sum_{e\in E(G)} r_u^S(e)w_e^2\right)\frac{\sum_{e\in E(G)} c_e q_u^S(e)^2/r_u^S(e)}{\sum_{e\in E(G)} c_e q_u^S(e)^2} \le \sum_{u\in S}\left(\sum_{e\in E(G)} r_u^S(e)w_e^2\right)\left(\sum_{i=0}^T 2^{i+1}\frac{\sum_{e\in Y_u^{(i)}} c_e q_u^S(e)^2}{\sum_{e\in E(G)} c_e q_u^S(e)^2}\right)$$

By the definition of $X_u^{(i)}$ and Proposition \ref{prop:norm-energy},

\begin{align*}
\sum_{u\in S}\left(\sum_{e\in E(G)} r_u^S(e)w_e^2\right)\left(\sum_{i=0}^T 2^{i+1}\frac{\sum_{e\in Y_u^{(i)}} c_e q_u^S(e)^2}{\sum_{e\in E(G)} c_e q_u^S(e)^2}\right)&\le \sum_{u\in S}\left(\sum_{e\in E(G)} r_u^S(e)w_e^2\right)\left(\sum_{i=0}^T 2\right)\\
&\le \sum_{u\in S}\left(\sum_{e\in E(G)} r_u^S(e)w_e^2\right)\left(2T + 2\right)\\
\end{align*}

By Proposition \ref{prop:sum-potentials},

$$\sum_{u\in S}\left(\sum_{e\in E(G)} r_u^S(e)w_e^2\right)\left(2T + 2\right)\le (6T + 6) \sum_{e\in E(G)} w_e^2$$

Combining these bounds shows that

$$\sum_{u\in S} \texttt{Degree}_S(u)\le (6T + 6) \sum_{e\in E(G)} w_e^2 \le O(\log |S|) ||w||_2^2$$

as desired.
\end{proof}

\subsection*{Acknowledgments} We would like to thank Hariharan Narayanan, Akshay Ramachandran, and Hong Zhou for many helpful conversations.

\bibliography{localization}

\newcommand{\etalchar}[1]{$^{#1}$}
\begin{thebibliography}{CKM{\etalchar{+}}11}

\bibitem[Bol13]{bollobas2013modern}
B{\'e}la Bollob{\'a}s.
\newblock {\em Modern graph theory}, volume 184.
\newblock Springer Science \& Business Media, 2013.

\bibitem[CKM{\etalchar{+}}11]{CKMST}
Paul Christiano, Jonathan~A Kelner, Aleksander Madry, Daniel~A Spielman, and
  Shang-Hua Teng.
\newblock Electrical flows, laplacian systems, and faster approximation of
  maximum flow in undirected graphs.
\newblock In {\em Proceedings of the forty-third annual ACM symposium on Theory
  of computing}, pages 273--282. ACM, 2011.

\bibitem[DKP{\etalchar{+}}17]{durfee}
David Durfee, Rasmus Kyng, John Peebles, Anup~B Rao, and Sushant Sachdeva.
\newblock Sampling random spanning trees faster than matrix multiplication.
\newblock In {\em Proceedings of the 49th Annual ACM SIGACT Symposium on Theory
  of Computing}, pages 730--742. ACM, 2017.

\bibitem[HHN{\etalchar{+}}08]{harsha08}
Prahladh Harsha, Thomas~P Hayes, Hariharan Narayanan, Harald R{\"a}cke, and
  Jaikumar Radhakrishnan.
\newblock Minimizing average latency in oblivious routing.
\newblock In {\em Proceedings of the nineteenth annual ACM-SIAM symposium on
  Discrete algorithms}, pages 200--207. Society for Industrial and Applied
  Mathematics, 2008.

\bibitem[KM09]{kelner2009faster}
Jonathan~A Kelner and Aleksander Madry.
\newblock Faster generation of random spanning trees.
\newblock In {\em Foundations of Computer Science, 2009. FOCS'09. 50th Annual
  IEEE Symposium on}, pages 13--21. IEEE, 2009.

\bibitem[KM11]{kelner2011electric}
Jonathan Kelner and Petar Maymounkov.
\newblock Electric routing and concurrent flow cutting.
\newblock {\em Theoretical Computer Science}, 412(32):4123--4135, 2011.

\bibitem[KMP12]{kelner2012faster}
Jonathan~A Kelner, Gary~L Miller, and Richard Peng.
\newblock Faster approximate multicommodity flow using quadratically coupled
  flows.
\newblock In {\em Proceedings of the forty-fourth annual ACM symposium on
  Theory of computing}, pages 1--18. ACM, 2012.

\bibitem[KS16]{kyng}
Rasmus Kyng and Sushant Sachdeva.
\newblock Approximate gaussian elimination for laplacians: Fast, sparse, and
  simple.
\newblock {\em CoRR}, abs/1605.02353, 2016.

\bibitem[LN09]{lawler}
Gregory Lawler and Hariharan Narayanan.
\newblock Mixing times and l p bounds for oblivious routing.
\newblock In {\em Proceedings of the Meeting on Analytic Algorithmics and
  Combinatorics}, pages 66--74. Society for Industrial and Applied Mathematics,
  2009.

\bibitem[LP16]{LP16}
Russell Lyons and Yuval Peres.
\newblock {\em Probability on Trees and Networks}, volume~42 of {\em Cambridge
  Series in Statistical and Probabilistic Mathematics}.
\newblock Cambridge University Press, New York, 2016.

\bibitem[LRS13]{LRS}
Yin~Tat Lee, Satish Rao, and Nikhil Srivastava.
\newblock A new approach to computing maximum flows using electrical flows.
\newblock In {\em Proceedings of the forty-fifth annual ACM symposium on Theory
  of computing}, pages 755--764. ACM, 2013.

\bibitem[Mad13]{madry13}
Aleksander Madry.
\newblock Navigating central path with electrical flows: From flows to
  matchings, and back.
\newblock In {\em Foundations of Computer Science (FOCS), 2013 IEEE 54th Annual
  Symposium on}, pages 253--262. IEEE, 2013.

\bibitem[MST15]{MST}
Aleksander Madry, Damian Straszak, and Jakub Tarnawski.
\newblock Fast generation of random spanning trees and the effective resistance
  metric.
\newblock In {\em Proceedings of the Twenty-Sixth Annual ACM-SIAM Symposium on
  Discrete Algorithms}, pages 2019--2036. Society for Industrial and Applied
  Mathematics, 2015.

\bibitem[Rac02]{racke}
Harald Racke.
\newblock Minimizing congestion in general networks.
\newblock In {\em Foundations of Computer Science, 2002. Proceedings. The 43rd
  Annual IEEE Symposium on}, pages 43--52. IEEE, 2002.

\bibitem[SS11]{spielman2011graph}
Daniel~A Spielman and Nikhil Srivastava.
\newblock Graph sparsification by effective resistances.
\newblock {\em SIAM Journal on Computing}, 40(6):1913--1926, 2011.

\bibitem[ST04]{spielmanteng}
Daniel~A Spielman and Shang-Hua Teng.
\newblock Nearly-linear time algorithms for graph partitioning, graph
  sparsification, and solving linear systems.
\newblock In {\em Proceedings of the thirty-sixth annual ACM symposium on
  Theory of computing}, pages 81--90. ACM, 2004.

\end{thebibliography}

\end{document}